\title{Approximate Point-to-Face Shortest Paths in $\mathcal{R}^3$\thanks{This
research was supported in part by NSF award CCF-0635013.}}
\author{Yam Ki Cheung and Ovidiu Daescu}
\institute{Department of Computer Science \\
       University of Texas at Dallas \\
       Richardson, TX 75080, USA \\
       {\tt \{ykcheung,daescu\}@utdallas.edu}}
\begin{document}

\pagestyle{empty}
\maketitle

\begin{abstract}
We address the {\em point-to-face} approximate shortest path problem in $\mathcal{R}^3$:
Given a set of polyhedral obstacles with a total of $n$ vertices, a source point $s$,
an obstacle face $f$, and a real positive parameter $\epsilon$, compute a path from $s$ to $f$
that avoids the interior of the obstacles and has length at most $(1+\epsilon)$ times the length
of the shortest obstacle avoiding path from $s$ to $f$.
We present three approximation algorithms that take
$O(n^4(L+\log(n/\epsilon))^2/\epsilon^2+n^2(L+\log(n/\epsilon))/\epsilon^3)$ time,
$O(T_{p-p}(n)*(1/\epsilon^2))$ time, and
$O(n^2\lambda(n)\log(n/\epsilon)/{\epsilon}^4+n^2\log(n\rho)\log(n\log\rho))$ time,
respectively, where $L$ is the precision of the integers used, $T_{p-p}(n)$ is the time
complexity of the point-to-point shortest path algorithm used, $\rho$ is the ratio of the length
of the longest obstacle edge to the
Euclidean distance between $s$ and $f$, and $\lambda(n)$ is a very slowly-growing function related to
the inverse of the Ackermann's function.
\end{abstract}

\section{Introduction}

The Euclidean shortest path problem among obstacles in the plane or space is one of the oldest and
 well-known problems in computational geometry. It has been intensively studied,
see~\cite{agarwal,chen,Clar87,har-peled,Her99,KMM97, Sar99,Pap85,sharir,Sharir86,GM87,KM88,OW88,Mit96,Mount84, Mit87},
as well as the survey by Mitchell \cite{Mit00}.
In general the problem is stated as: Given a set of obstacles in the d-dimensional
space $\mathcal{R}^d$ with a total of $n$ vertices, find a shortest (Euclidean) path between a source
point $s$ and a target point $t$ while avoiding the interior of the obstacles. There are two
commonly studied versions of this problem. One is the {\em single pair} version, that asks to find a
shortest path between two given query points. The other one is the
single source version, which first constructs a shortest path map with respect to a source
point $s$. After that, for any given query point $t$, a shortest path between $s$ and $t$ can be
found based on the shortest path map.

In $\mathcal{R}^2$, shortest paths are polygonal and turn only at the vertices of the polygonal obstacles.
Sharir and Schorr~\cite{Sharir86} have developed an $O(n^2\log n)$ time algorithm based on discrete
graph searching and the visibility graph of the obstacles, where $n$ is the number of the obstacle
vertices. Various studies, e.g.~\cite{GM87,KM88,OW88}, improved the time to quadratic in worst case.
Kapoor, Maheshwari, and Mitchell \cite{KMM97} gave an interesting $O(n+h^2\log n)$ time and $O(n)$
space algorithm based on the visibility graph approach, where $h$ is the number of holes (obstacles)
of the given input. This algorithm is the only algorithm known to be linear in $n$ in both time and space.
However, the time dependence on $h$ is quadratic, so the algorithm does not perform well if $h$ is not
relatively small compared to $n$. Avoiding the visibility graph approach, Mitchell~\cite{Mit96} developed
a version of the continuous Dijkstra method and obtained the first subquadratic, $O(n^{3/2+\epsilon})$
time algorithm. Subsequently, based on the same technique, this result was improved by Hershberger
and Suri~\cite{Her99} to $O(n\log n)$ time and $O(n \log n)$ space.

In $\mathcal{R}^3$, shortest paths among polyhedral obstacles are polygonal and turn only on obstacle
edges or vertices. However, unlike the case in $\mathcal{R}^2$, shortest paths need not lie on any
discrete graph. Sharir and Schorr~\cite{Sharir86} have shown that shortest paths in $\mathcal{R}^3$
are {\it geodesic}, i.e. paths must enter and leave an edge at the same angle. Given a distinct sequence
of edges, the local optimal path between two points can be unfolded at each edge to form a straight line,
and the local optimal path can be uniquely identified. Nevertheless, the problem is still significantly
harder than in $\mathcal{R}^2$. For algebraic considerations, Bajaj~\cite{Bajaj85, Bajaj88} has shown
that the algebraic complexity is exponential, since comparing the lengths of two paths may require
exponentially many bits.
Considering the combinatorial aspect of the problem, Canny and Reif~\cite{reif} have shown that the
shortest path problem in $\mathcal{R}^3$ is NP-hard.

An interesting special case of the shortest path problem is that in which the path is
restricted to the surface of a single polytope.
The first significant study of this special case in computational
geometry  is by Sharir and Schorr~\cite{Sharir86}. They gave an $O(n^3\log n)$ time
algorithm for convex polytopes by exploiting the special structure of geodesic paths along the surface
of a convex polytope. Mount \cite{Mount84} gave an improved algorithm for convex polytopes with running
time $O(n^2\log n)$. For general nonconvex polytopes, Mitchell, Mount and Papadimitriou~\cite{Mit87}
presented an $O(n^2\log n)$ algorithm extending the technique of Mount \cite{Mount84}.
See also~\cite{chen,agarwal,har-peled,sharir} for more studies for this topic.

Our real interest is in the general case in $\mathcal{R}^3$.
Several papers~\cite{Pap85,Clar87,Sar99} have presented polynomial time
approximation algorithms for computing an $\epsilon$-$approximate$ path, which has length at
most $(1+\epsilon)$ times the length of the shortest path between two query points,
where $\epsilon$ is a real parameter defining the quality of the approximation.
We will discuss those relevant to this paper in Section~\ref{prev}.

Throughout the paper we use the following notations.

\begin{tabbing}
$S$ denotes a set of polyhedral obstacles;\\
$E$ denotes the set of edges of $S$;\\
$P_{st}$ denotes a shortest path from $s$ to $t$ that avoids the interior of the obstacles; \\
$Q_{st}$ denotes an approximate path of $P_{st}$ (see next section); \\
$|P|$ denotes the length of a given path $P$; \\
$st$ denotes the line segment between points $s$ and $t$:\\
$|st|$ denotes the Euclidean distance between $s$ and $t$.\\
\end{tabbing}

\subsection{Our Results}

In this paper, we address the {\em point-to-face} shortest path problem in $\mathcal{R}^3$:
Given a set of polyhedral obstacles with a total of $n$ vertices, a source point $s$, and a destination
face $f$ of some obstacle in the input, compute an $\epsilon$-$approximate$ path from $s$ to $f$ that
avoids the interior of the obstacles.
We assume the obstacle surfaces are triangulated, so $f$ is a triangle in $\mathcal{R}^3$.


The problem has multiple applications, including robot navigation, path planning and
structural proteomics.
For example, in structural proteomics, after a protein surface has been segmented and pockets identified,
various descriptors can be associated
with key points of the pocket. Given a key point $s$, one can measure the pocket depth of $s$ by
computing various distance measures from
$s$ to the ``caps'' of the pocket, the most natural of which is the shortest collision free
distance from $s$ to the caps (a cap is a face of the convex hull of the protein).

We present three approximation algorithms for the point-to-face shortest path problem that take
$O(n^4(L+\log(n/\epsilon))^2/\epsilon^2+n^2(L+\log(n/\epsilon))/\epsilon^3)$ time,
$O(T_{p-p}(n)*(1/\epsilon^2))$ time, and
$O(n^2\lambda(n)\log(n/\epsilon)/{\epsilon}^4+n^2\log(n\rho)\log(n\log\rho))$ time,
respectively, where $L$ is the precision of the integers used, $T_{p-p}(n)$ is the time
complexity of the point-to-point shortest path algorithm used, $\rho$ is the ratio of the length
of the longest obstacle edge to the
Euclidean distance between $s$ and $f$, and $\lambda(n)$ is a very slowly-growing function related to
the inverse of the Ackermann's function. The main contribution of this paper is in the third algorithm.


\subsection{Previous Work in $\mathcal{R}^3$}
\label{prev}


In the approximate shortest path problem, an additional real positive parameter $\epsilon > 0$, which defines the quality of the approximation, is given as part of
the input, and the goal is to find a path between two given points $s$ and $t$
that avoids (the interior of)
the obstacles and has length at most $(1+\epsilon)$ times the length of the shortest obstacle avoiding
path between those two points. Such an approximate path is referred to as an $\epsilon$-$approximate$
path or an $\epsilon$-$approximation$ of the shortest path. In this paper, we assume
$\epsilon \le 1$.
There is also a more general approach, where given a point $s$, a shortest path map with respect to $s$
is constructed. In $\mathcal{R}^3$, this approach has been investigated for shortest paths on polytopes,
as well as for shortest paths among obstacles~\cite{Sar99}.

Papadimitriou \cite{Pap85} gave the first fully polynomial time approximation scheme for the general shortest
path problem in $\mathcal{R}^3$.
The time complexity of the algorithm is $O(n^4(L+\log(n/\epsilon))^2/\epsilon^2)$,
where $n$ is the complexity of the set of obstacles $S$, i.e. the number of edges, and $L$ is the precision of the
integers used, that is, the number of bits in the largest integer describing the coordinates of
any scene element.

\begin{figure}
    \begin{center}
    \leavevmode
    \includegraphics[height=2in]{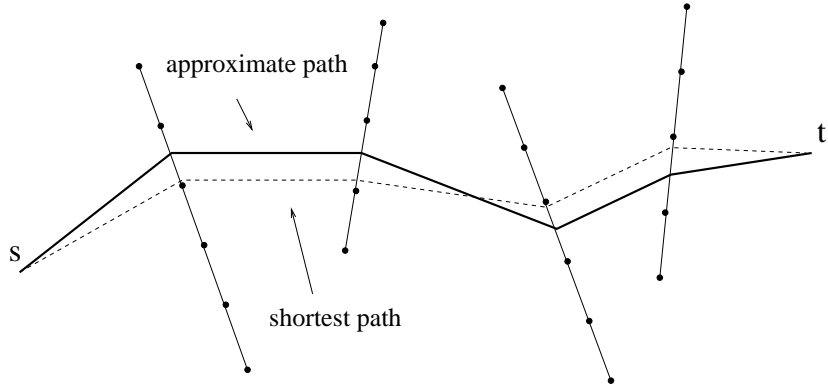}
    \caption{A shortest path and its approximation.}
    \label{jfig1}
    \end{center}
    \vspace{-0.15in}
\end{figure}

The approach is relatively simple and thus it could be implemented in practice. It discretizes the
problem by breaking every edge into a number of small segments. A visibility graph $G$ is constructed,
in which each node represents a segment. A link between two nodes is created if the two segments
represented are visible to each other. The weight of the link is set to the Euclidean distance between
the mid-points of the segments (see Fig.~\ref{jfig1} for an illustration).

To partition one edge $e$ of $S$, a coordinate system is chosen such that $e$ lies on the x-axis and the
origin is the closest point on $e$ to $s$. A sequence of points is added on $e$ with coordinates
$\{x_i=\epsilon_1a(1+\epsilon_1)^{i-1}:i=1,2,3,\ldots\}$, where $a$ is the Euclidean distance from
$s$ to $e$ and $\epsilon_1$ is a real positive parameter. As a result, the length of each segment is
no more than $\epsilon_1$ times the distance from $s$ to the segment. It is shown in \cite{Pap85}
that applying Dijkstra's algorithm on $G$ gives a path which is at most $(1+4n\epsilon_1)$ times the
length of the shortest path. Taking $\epsilon_1=\epsilon/4n$ gives an $\epsilon$-$approximate$ path.

Given a source point $s$, a real positive parameter $\epsilon$, and a set of polyhedral obstacles $S$ in $\mathcal{R}^3$, with a total of $n$ vertices,
Har-Peled \cite{Sar99} presented a technique to construct approximate shortest path maps in $O(T_{p-p}(n)*(1/\epsilon^2\log1/\epsilon))$ time on each face and in $O(T_{p-p}(n)*(1/\epsilon^2\log1/\epsilon)n^2)$ time total in $\mathcal{R}^3$, where $T_{p-p}(n)$ is the time complexity of the point-to-point shortest path
algorithm used. In \cite{Sar99}, the point-to-point shortest path
algorithm used is that of Clarkson\cite{Clar87}, which we will discuss later.
Once the map is constructed, it takes $O(\log n/\epsilon)$
time for each query, that is, given a query point $t\in\mathcal{R}^3$, the length of a
$(1+\epsilon)$-$approximate$ path
from $s$ to $t$ can be reported in $O(\log (n/\epsilon))$ time.

\begin{figure}
    \begin{center}
    \leavevmode
    \includegraphics[height=2in]{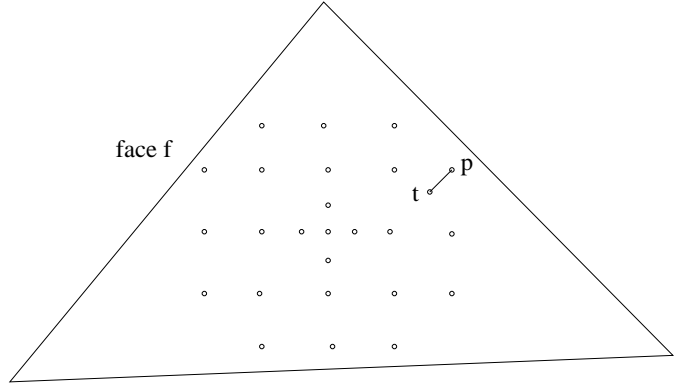}
    \caption{The distance function $f(t)$ is defined as $f(t)=w_{p}+|pt|$}
    \label{jfig2}
    \end{center}
\end{figure}

To construct a shortest path map for $s$
on an obstacle face $f$, Har-Peled's algorithm places a set of
$O(1/\epsilon^2\log1/\epsilon)$ points on $f$. A weighted Voronoi
diagram is then constructed on those points.
The weight of each point is the length of an $\epsilon/8$-$approximate$ path from $s$ to the point,
obtained by any
existing shortest path algorithm. For a point $t\in f$, the distance function $f(t)$ is defined as
$f(t)=w_{p}+|pt|$,
where $p$ is the point in the weighted Voronoi diagram closest
to $t$, $w_{p}$ is the
weight of $p$, and $|pt|$ is the Euclidean distance between $p$ and $t$ (see Fig.~\ref{jfig2}).
It was shown in \cite{Sar99} that
if the points are placed ``carefully'', $f(t)$ is at most $(1+\epsilon)$ times the length of the shortest
path between $s$ and $t$.

Clarkson~\cite{Clar87} gave an
$O(n^2\lambda(n)\log(n/\epsilon)/{\epsilon}^4+n^2\log (n\rho) \log(n\log\rho))$
algorithm for computing an $\epsilon$-$approximation$ of the shortest obstacle avoiding path
between two given points $s$ and $t$ in
$\mathcal{R}^3$, where $\rho$ is the ratio of the length of the longest obstacle edge to the
Euclidean distance between $s$ and $t$, and $\lambda(n)$ is a very slowly-growing function related to
the inverse of the Ackermann's function.

Let $S$ be the set of obstacles.
The general idea of Clarkson's approach is to construct a visibility graph $V$.
$V$ contains $s$, $t$, and the points on $S$ that $P_{st}$ might pass through. To limit the number of
edges per node in $V$, a cone structure $C$ is applied on every node $x\in V$
(see Fig.~\ref{jfig3}). In a cone $C_a\in C$,
with apex $x$, if $y$ is the closest node visible to $x$ then an edge $e$ between $x$ and $y$ is added to
$V$. The weight of $e$ is $|xy|$. Notice that there could be many nodes visible to $x$ in $C_a$, but at
most one edge is added. Clarkson showed that to ensure that the path found by running a single
source shortest path algorithm on $V$ is an $\epsilon$-$approximate$ path of $P_{st}$,
the size of the cone structure needed is $O(1/{\epsilon}^{d-1})$, where $d$ is the dimension of the problem
(i.e., $d=3$ in $\mathcal{R}^3$).
Hence, there are $O(1/{\epsilon}^d)$ edges incident to a node in $V$~\cite{Clar87}.
In the three-dimensional case, $P_{st}$ could make a turn at any point on any edge $e \in E$.
Let $a$ and $b$ be the end points of an edge $e \in E$. A point $x \in e$ can be expressed as a function
$x=\beta a+(1-\beta)b$, where $0 \le \beta \le 1$.
For a cone $C_a$ with apex $x=\beta a+(1-\beta)b$,
define the distance function $h^{C_a}(\beta)=|xy|$, where
$y$ is the closest node in $C_a$ visible to $x$.
One important observation Clarkson made is that $h^{C_a}$ is piecewise linear.
Given a predefined cone structure $C$, each $e\in E$ can be divided into segments according to
this piecewise
linearity. The set of line segments formed is called the \emph{combinatorial characterization} of
$V$ \cite{Clar87}, and is denoted as $K_{\epsilon}$. According to the study of Davenport-Schinzel
sequences \cite{SCK86}, the size of $K_{\epsilon}$ is $O(n^2\lambda(n)/{\epsilon}^2)$, where $n$ is the
number of obstacle edges, and $\lambda(n)$ is a very slowly-growing function (see above).
Hence the connectivity relations of $V$ can be represented by $K_{\epsilon}$.

\begin{figure}
    \begin{center}
    \leavevmode
    \includegraphics[height=2in]{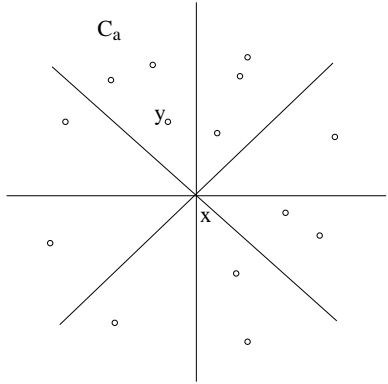}
    \caption{A simple cone structure with 8 cones.}
    \label{jfig3}
    \end{center}
\end{figure}

$V$ contains a set of
carefully selected Steiner points chosen as follows.
All endpoints of segments in $K_{\epsilon}$ are Steiner
points. Additional Steiner points are added to further divide each edge into segments no longer than
$\frac{\epsilon}{4n}B$, where $B$ is a lower bound on $|P_{st}|$.

The algorithm for computing an $\epsilon$-$approximate$ path consists of two phases.
In the first phase, $B$ is set to
$|st|$, $\epsilon$ is set to 1/2, and a simpler cone structure is used
to compute a $1/2$-$approximate$ path. This provides a better lower bound on $|P_{st}|$ for the second phase,
which gives the final approximation.


\section{Approximate Point-to-Face Shortest Paths}
In this section we
present three algorithms for finding approximate point-to-face shortest paths among polyhedral obstacles in $\mathcal{R}^3$.

\subsection{A General Approach}
The main idea used
is to place (a grid of) Steiner points on the target face $f$
such that there is at least one point close enough to $t$ to give us a good approximation.

Let $B$ and $D$ denote a lower bound and an upper bound on the shortest path between $s$ and $f$,
respectively.

Lemma \ref{2approximation} below is an application of Lemma~2.10 of \cite{Sar99}, which proved a more
general claim.
\begin{lemma}
\label{2approximation}
Let $x$ be any point on the target face $f$. Let $h$ be the point on $f$ closest to the source $s$
with respect to the Euclidean distance.
We have $|P_{sh}|\le 2 |P_{sx}|$.
\end{lemma}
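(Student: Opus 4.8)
The plan is to bound $|P_{sh}|$ by comparing it against a concrete detour that first follows the shortest path $P_{sx}$ to $x$ and then travels along the face $f$ from $x$ to $h$. Since $x$ and $h$ both lie on the (triangulated, hence convex) face $f$, the segment $xh$ lies entirely in $f$ and therefore does not enter the interior of any obstacle; concatenating $P_{sx}$ with $xh$ gives a valid obstacle-avoiding path from $s$ to $h$, so $|P_{sh}| \le |P_{sx}| + |xh|$. The work is then to show $|xh| \le |P_{sx}|$, which yields $|P_{sh}| \le 2|P_{sx}|$ as claimed.

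To control $|xh|$, I would use the defining property of $h$: it is the point of $f$ closest to $s$ in Euclidean distance. First, $|sx| \le |P_{sx}|$, since the straight segment is no longer than any path between the same endpoints. Next I want to relate $|xh|$ to $|sx|$. The key geometric fact is that $h$ is the orthogonal (nearest-point) projection of $s$ onto the convex set $f$, so for every point $x \in f$ the angle $\angle sh x$ is at least $\pi/2$ (this is the standard obtuse-angle characterization of projection onto a convex set). Hence in the triangle $s h x$ the side $sx$ opposite the obtuse (or right) angle at $h$ is the longest, giving $|xh| \le |sx| \le |P_{sx}|$. Combining with the previous paragraph completes the argument.

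The one place that needs care — and the main obstacle — is justifying $\angle shx \ge \pi/2$ cleanly for a point $x$ on a triangular face, including boundary cases where $h$ is a vertex or lies on an edge of $f$, and where $s$ may even lie in the plane of $f$ (in which case $|sh|=0$ if $s\in f$ and the bound is immediate, or $h$ is on $\partial f$). In all cases the variational inequality for projection onto the convex polygon $f$, namely $\langle s-h,\ x-h\rangle \le 0$ for all $x\in f$, handles it uniformly; from this inner-product inequality one gets $|sx|^2 = |sh|^2 + |hx|^2 - 2\langle s-h, x-h\rangle \ge |sh|^2 + |hx|^2 \ge |hx|^2$, so $|hx|\le|sx|$ without any case analysis. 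I would phrase the proof around this inequality rather than around angles. Since the excerpt notes this is a special case of Lemma~2.10 of~\cite{Sar99}, an alternative is simply to invoke that lemma with the appropriate specialization, but the self-contained argument above is short enough to include directly.
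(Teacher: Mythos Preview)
Your proposal is correct and follows essentially the same route as the paper: bound $|P_{sh}|$ by the detour $P_{sx}$ followed by the face segment $xh$, then use $|hx|\le|sx|\le|P_{sx}|$. The paper simply asserts $|hx|\le|sx|$ as ``obvious'' (after noting that $h$ is the projection of $s$ onto $f$), whereas you supply the clean justification via the variational inequality $\langle s-h,\,x-h\rangle\le 0$ for projection onto a convex set; this is a welcome elaboration but not a different argument.
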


\begin{proof}
Refer to Fig. \ref{jfig4}. Let $s'$ be the projection of $s$ on the plane supporting $f$.
If $s'\in f$, then $h=s'$, otherwise $h$ lies on the boundary of $f$.
It is obvious that $|hx|\le |sx|\le |P_{sx}|$. We have $|P_{sh}|\le |P_{sx}|+|hx| \le 2|P_{sx}|$.
\hfill $\Box$
\end{proof}

\begin{figure}
    \begin{center}
    \leavevmode
    \includegraphics[height=2in]{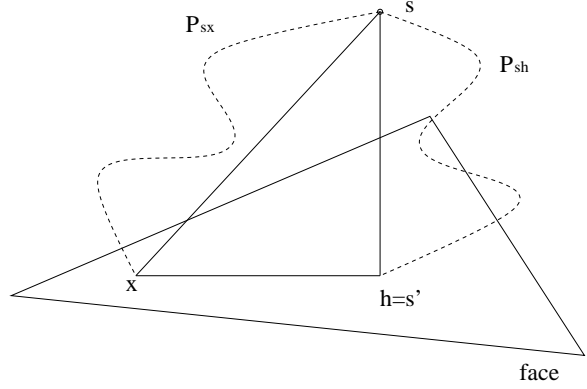}
    \caption{$|P_{sh}|$ is no more than $2|P_{sx}|$.}
    \label{jfig4}
    \end{center}
\end{figure}

Let $t\in f$ be the end point of the shortest path between $s$ and $f$. Hence the shortest path between $s$ and $f$ can also be expressed as $P_{st}$. By Lemma~\ref{2approximation}, we have $|P_{sh}|\le 2 |P_{st}|$. By computing an $\epsilon$-$approximate$ path between $s$ and $h$, i.e. $Q_{sh}$, we can obtain the lower bound and upper bound on $|P_{st}|$ as $B=|Q_{sh}|/(2(1+\epsilon))$ and $D=|Q_{sh}|$.

\subsection{The First Algorithm}
Our first algorithm for finding approximate point-to-face shortest paths extends Papadimitriou's algorithm for the point-to-point version. We proceed as follows.
The lower bound $B$ and the upper bound $D$ of the shortest path can be computed by identifying the
point $h$ and computing $Q_{sh}$ using Papadimitriou's algorithm.
Obviously $|ht|\leq P_{st}\leq D$, which implies $t$ must be within distance $D$ of $h$ (see Fig.~\ref{jfig5}).

\begin{figure}
    \begin{center}
    \leavevmode
    \includegraphics[height=2in]{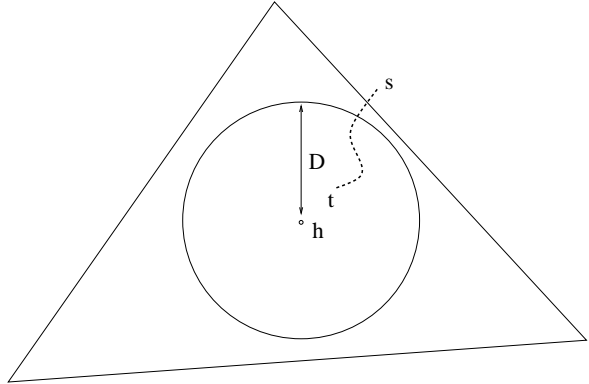}
    \caption{The point $t$ must be within $D$ distance of $h$.}
    \label{jfig5}
    \end{center}
\end{figure}

We define a sample grid as a
uniform grid of unit length $\epsilon B$, which is applied on the target face $f$ within distance $D$ of $h$. The total number
of sample points is $O(1/\epsilon^2)$.

\begin{lemma}
\label{sample}
There exists one sample point $t'$ that gives a $3\epsilon$-$approximate$ path.
\end{lemma}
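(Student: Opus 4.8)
The plan is to locate the grid point $t'$ closest to the true endpoint $t$ of the shortest point-to-face path, bound the detour incurred by ending at $t'$ instead of $t$, and then combine this with the fact that the grid is defined relative to the lower bound $B$ on $|P_{st}|$. First I would observe that since $t$ lies on $f$ within distance $D$ of $h$ (because $|ht| \le |P_{st}| \le D$, as noted just before the lemma), $t$ falls inside the region where the uniform grid of side length $\epsilon B$ is applied. Hence there is a grid point $t'$ with $|t't| \le \sqrt{2}\,\epsilon B / 2 < \epsilon B$ (taking the nearest grid vertex of the square containing $t$; any crude constant $\le 1$ suffices after absorbing it). Since $B \le |P_{st}|$, this gives $|t't| \le \epsilon |P_{st}|$.

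Next I would construct the candidate path to $t'$: follow the shortest path $P_{st}$ from $s$ to $t$, then append the segment $t't$ traversed from $t$ to $t'$. This concatenated path is obstacle-avoiding provided the short segment $tt'$ does not cross into an obstacle's interior — here I would invoke that $t,t'$ both lie on the face $f$, so the segment stays on $f$ (or I would note that the grid is restricted to $f$ and $f$ is a face of an obstacle, so lying on it is admissible); its length is at most $|P_{st}| + |t't| \le (1+\epsilon)|P_{st}|$. Therefore the shortest obstacle-avoiding path from $s$ to $t'$ satisfies $|P_{st'}| \le (1+\epsilon)|P_{st}|$.

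Finally, the algorithm does not compute $P_{st'}$ exactly but rather an $\epsilon$-approximation $Q_{st'}$ of it (via the point-to-point routine), so $|Q_{st'}| \le (1+\epsilon)|P_{st'}| \le (1+\epsilon)^2 |P_{st}| \le (1+3\epsilon)|P_{st}|$ using $\epsilon \le 1$. Since $P_{st}$ is exactly the shortest point-to-face path, $Q_{st'}$ is a $3\epsilon$-approximate point-to-face path, and $t'$ is the desired sample point. The main obstacle I anticipate is purely bookkeeping: tracking which factors of $(1+\epsilon)$ come from the grid spacing (controlled by the lower bound $B$), which from approximating $P_{st'}$, and checking that the product collapses to $1+3\epsilon$ under $\epsilon \le 1$; there is also a minor point to be careful about — ensuring the nearest grid point $t'$ actually lies within the gridded region and that $tt'$ remains on $f$ — but both follow directly from $|ht| \le D$ and from the grid being placed on $f$.
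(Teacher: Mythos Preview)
Your proof is correct and follows essentially the same approach as the paper: pick the grid point $t'$ nearest to $t$, use $|tt'|\le \epsilon B\le \epsilon|P_{st}|$ together with the triangle inequality to get $|P_{st'}|\le (1+\epsilon)|P_{st}|$, and then apply the $(1+\epsilon)$ point-to-point approximation factor to obtain $|Q_{st'}|\le (1+\epsilon)^2|P_{st}|\le (1+3\epsilon)|P_{st}|$. The paper's proof is terser and omits the justification that $tt'\subset f$, but the argument is otherwise identical.
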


\noindent \begin{proof}
Let $t'$ be the sample point closest to $t$. We have $|tt'| \le \epsilon B \le \epsilon |P_{st}|$ and
\begin{eqnarray*}
|P_{st'}| &\le& |P_{st}| + |tt'|\\
|Q_{st'}| &\le& (1+\epsilon)|P_{st'}| \\
        &\le& (1+\epsilon)(|P_{st}| + |tt'|)\\
        &\le& (1+\epsilon)(|P_{st}| +\epsilon |P_{st}|)\\
        &\le& (1+3\epsilon)|P_{st}|\\
\end{eqnarray*}
where $Q_{st'}$ is the approximating path and $P_{st}$ is the shortest path.
\hfill $\Box$
\end{proof}

In Papadimitriou's algorithm, each edge of $S$ is split into at most
$O((L+\log(1/\epsilon_1))/\epsilon_1)$ segments \cite{Pap85}.
The number of visibility graph edges resulting from this subdivision of edges of $S$ is
$O((n((L+\log(1/\epsilon_1))/\epsilon_1))^2)$. With $\epsilon_1=\epsilon/4n$ we have
$O(n^4(L+\log(n/\epsilon))^2/\epsilon^2)$ edges. We also have
$O(n^2(L+\log(n/\epsilon))/\epsilon)*(1/\epsilon^2))$ graph edges corresponding to visibility edges between
the segments on edges of $S$ and the sample points on $f$.
The number of vertices of the graph is $O(n^2(L+\log(n/\epsilon))/\epsilon+1/\epsilon^2)$. Dijkstra's algorithm takes
$O(N\log N +M)$ time, where $M$ is the number of graph edges and $N$ is the number of graph nodes. With $M= O(n^4(L+\log(n/\epsilon))^2/\epsilon^2+n^2(L+\log(n/\epsilon))/\epsilon^3)$ and $N=n^2(L+\log(n/\epsilon))/\epsilon+1/\epsilon^2)$, Dijkstra's algorithm takes $O(n^4(L+\log(n/\epsilon))^2/\epsilon^2+n^2(L+\log(n/\epsilon))/\epsilon^3)$ time to compute a $3\epsilon$-$approximate$ path between $s$ and $f$ by Lemma~\ref{sample}.

In order to have an $\epsilon$-$approximate$ path between $s$ and $f$, we simply use a new parameter $\epsilon ''=\epsilon/3$ to construct the visibility graph. The new parameter does not affect the complexity of the algorithm.

\begin{theorem}
Given a point $s$, a face $f$, a set $S$ of obstacles, with a total of $n$ vertices, and a real positive parameter $\epsilon$, an $\epsilon$-$approximation$ of
the shortest obstacle-avoiding path from $s$ to $f$ can be computed in
$O(n^4(L+\log(n/\epsilon))^2/\epsilon^2+n^2(L+\log(n/\epsilon))/\epsilon^3)$ time,
where $n$ is the complexity of $S$ and $L$ is the number of bits in the largest integer describing the coordinates of any scene element.
\end{theorem}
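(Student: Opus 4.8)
The plan is to assemble the theorem from the three ingredients already developed in this subsection: the lower/upper bound construction via Lemma~\ref{2approximation}, the sample grid of Lemma~\ref{sample}, and the accounting of the visibility graph size together with the running time of Dijkstra's algorithm. First I would recall that by identifying the point $h$ on $f$ closest to $s$ and running Papadimitriou's algorithm to obtain $Q_{sh}$, we get $B=|Q_{sh}|/(2(1+\epsilon))$ and $D=|Q_{sh}|$ with $B\le |P_{st}|\le D$; this computation costs $O(n^4(L+\log(n/\epsilon))^2/\epsilon^2)$ time, which is dominated by the later bound. Then I would place the uniform grid of spacing $\epsilon B$ on the portion of $f$ within distance $D$ of $h$, observing that since $D\le 2(1+\epsilon)B=O(B)$ the grid has only $O(1/\epsilon^2)$ points, and invoke Lemma~\ref{sample} to conclude that running a single-source shortest path computation to all sample points yields a $3\epsilon$-approximate path to $f$.

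Next I would carry out the size and time accounting for the augmented visibility graph $G$. The edges of $S$ contribute $O(n^2(L+\log(n/\epsilon))^2/\epsilon_1^2)$ intra-obstacle visibility graph edges after each edge of $S$ is split into $O((L+\log(1/\epsilon_1))/\epsilon_1)$ segments; with $\epsilon_1=\epsilon/4n$ this is $O(n^4(L+\log(n/\epsilon))^2/\epsilon^2)$. The visibility edges between the $O(n^2(L+\log(n/\epsilon))/\epsilon)$ obstacle-edge segments and the $O(1/\epsilon^2)$ sample points on $f$ contribute $O(n^2(L+\log(n/\epsilon))/\epsilon^3)$ more edges. Summing, $M=O(n^4(L+\log(n/\epsilon))^2/\epsilon^2+n^2(L+\log(n/\epsilon))/\epsilon^3)$ and $N=O(n^2(L+\log(n/\epsilon))/\epsilon+1/\epsilon^2)$, so Dijkstra's $O(N\log N+M)$ bound gives the claimed running time, since $N\log N$ is absorbed into $M$.

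Finally I would handle the constant-factor discrepancy: the construction above delivers a $3\epsilon$-approximation, so to obtain an $\epsilon$-approximation I would rerun the whole construction with the parameter $\epsilon''=\epsilon/3$ in place of $\epsilon$. Since $\log(n/\epsilon'')=\log(3n/\epsilon)=\Theta(\log(n/\epsilon))$ and $1/\epsilon''=3/\epsilon=\Theta(1/\epsilon)$, all the asymptotic bounds are unchanged, and Lemma~\ref{sample} applied with $\epsilon''$ yields a $3\epsilon''=\epsilon$ approximation. The only point requiring a little care — the main (minor) obstacle — is the bookkeeping that $D=O(B)$ so that the grid really has $O(1/\epsilon^2)$ points, and that the cross visibility edges between segments and sample points are counted exactly once with the right product of the two cardinalities; everything else is routine substitution.
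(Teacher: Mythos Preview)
Your proposal is correct and follows essentially the same approach as the paper: compute $B$ and $D$ via $h$ and Papadimitriou's algorithm, lay down the $O(1/\epsilon^2)$ grid, count graph vertices and edges (including the cross edges between obstacle-edge segments and sample points), apply Dijkstra, and finally substitute $\epsilon''=\epsilon/3$ to convert the $3\epsilon$-approximation into an $\epsilon$-approximation without affecting the asymptotics. Your write-up is in fact slightly more explicit than the paper's in justifying that $D=O(B)$ and that $N\log N$ is dominated by $M$, but the argument is the same.
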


\subsection{The Second Algorithm}
Our second algorithm builds upon Har-Peled's point-to-point approximate shortest path algorithm. Har-Peled' algorithm constructs an approximate shortest path map on a given face $f$ with respect to a source $s$ in $O(T_{p-p}(n)*(1/\epsilon^2\log1/\epsilon))$ time, where $T_{p-p}(n)$ is the complexity of the point-to-point shortest path algorithm used. As mentioned previously, a full map on $f$ is unnecessary as $t$ must locate within $D$ radius of $h$. Hence only a partial map within $D$ radius of $h$ is sufficient to capture a good approximation. There is no actual query to be performed, since the sample points placed on $f$ to compute the weighted Voronoi diagram already serve the purpose as a sample grid. The number of sample points placed within $D$ radius of $h$ is $O(1/\epsilon^2)$ \cite{Sar99}. It takes $O(T_{p-p}(n)*(1/\epsilon^2))$ time to construct the partial map. If Clarkson's algorithm is used for $T_{p-p}$, we have:

\begin{theorem}
Given a query point $s$, a face $f$ and a real positive parameter $\epsilon$, an $\epsilon$-$approximation$ of
the shortest obstacle-avoiding path from $s$ to $f$ can be computed in $O((n^2\lambda(n)\log(n/\epsilon)/{\epsilon}^4+n^2\log (n\rho) \log(n\log\rho))*(1/\epsilon^2))$ time
, where $\rho$ is the ratio of the length of the longest obstacle edge to the
Euclidean distance between $s$ and $t$, and $\lambda(n)$ is a very slowly-growing function related to
the inverse of the Ackermann's function.
\end{theorem}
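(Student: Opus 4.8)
The plan is to assemble the second algorithm's correctness and running time from the pieces already established in the excerpt, treating the theorem as essentially a corollary of Lemma~\ref{2approximation}, Lemma~\ref{sample}, and Har-Peled's construction. First I would recall that, as noted, computing the point $h$ on $f$ nearest to $s$ and then an $\epsilon$-approximate path $Q_{sh}$ (via Clarkson's point-to-point algorithm) yields, by Lemma~\ref{2approximation}, the bounds $B=|Q_{sh}|/(2(1+\epsilon))$ and $D=|Q_{sh}|$ on $|P_{st}|$, where $t\in f$ is the true endpoint of the shortest path from $s$ to $f$; in particular $|ht|\le D$, so $t$ lies within distance $D$ of $h$ on $f$. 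This step costs one invocation of Clarkson's algorithm, i.e.\ $O(n^2\lambda(n)\log(n/\epsilon)/\epsilon^4+n^2\log(n\rho)\log(n\log\rho))$ time, which is absorbed into the stated bound since the second term dominates $T_{p-p}(n)$ once multiplied by $1/\epsilon^2\ge 1$.

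Next I would invoke Har-Peled's construction restricted to the disk of radius $D$ about $h$ on $f$: place the $O(1/\epsilon^2)$ Steiner points that Har-Peled's scheme prescribes for an approximate shortest path map on that region, compute an $\epsilon/8$-approximate path from $s$ to each such point with Clarkson's algorithm, and observe (as the text already argues) that no weighted Voronoi diagram or query phase is actually needed — the Steiner points themselves function as the sample grid. The key correctness claim is that among these $O(1/\epsilon^2)$ points there is one, call it $t'$, whose associated approximate path $Q_{st'}$ has length at most $(1+\epsilon)|P_{st}|$. This follows by the same argument as Lemma~\ref{sample}: Har-Peled's placement guarantees a sample point within distance $O(\epsilon)\cdot|P_{st}|$ of $t$ (this is precisely the content of Lemma~2.10 of \cite{Sar99} that underlies Lemma~\ref{2approximation}), and then the triangle-inequality chain $|Q_{st'}|\le(1+\epsilon/8)|P_{st'}|\le(1+\epsilon/8)(|P_{st}|+|tt'|)$ together with rescaling the internal parameter gives the $(1+\epsilon)$ factor; the rescaling only changes constants and so does not affect the asymptotic running time.

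For the time bound I would simply tally: constructing the partial map requires $O(1/\epsilon^2)$ calls to the point-to-point algorithm, for $O(T_{p-p}(n)\cdot 1/\epsilon^2)$ total, and substituting Clarkson's $T_{p-p}(n)=O(n^2\lambda(n)\log(n/\epsilon)/\epsilon^4+n^2\log(n\rho)\log(n\log\rho))$ yields exactly the claimed $O((n^2\lambda(n)\log(n/\epsilon)/\epsilon^4+n^2\log(n\rho)\log(n\log\rho))\cdot(1/\epsilon^2))$; finding $h$, scanning the $O(1/\epsilon^2)$ weights for the minimum, and the preliminary computation of $Q_{sh}$ are all dominated by this. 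The only genuine subtlety — and the step I expect to need the most care — is verifying that Har-Peled's sampling guarantee, originally stated for an entire face, still holds verbatim when we keep only the portion of the sample set inside the radius-$D$ disk around $h$; this is fine because $t$ is certified to lie in that disk and Har-Peled's guarantee is local to a neighborhood of the query point, so the points discarded outside the disk are never the ones his proof would have selected. Everything else is bookkeeping already carried out in the surrounding text.
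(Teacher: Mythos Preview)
Your proposal is correct and follows essentially the same route as the paper: restrict Har-Peled's sampling to the radius-$D$ disk about $h$ (justified via Lemma~\ref{2approximation}), observe that only $O(1/\epsilon^2)$ sample points land there and that no Voronoi diagram or query phase is needed, and then charge $O(1/\epsilon^2)$ invocations of Clarkson's point-to-point algorithm. The paper's own argument is terser---it simply cites \cite{Sar99} for the $O(1/\epsilon^2)$ count and the approximation guarantee rather than re-deriving them via a Lemma~\ref{sample}-style triangle-inequality chain---but the structure and the ingredients are the same.
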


This second algorithm is very similar to the first one. Both algorithms need a sample grid of size $O(1/\epsilon^2)$ and execute a point-to-point shortest path algorithm on each point, which means we have to run a point-to-point shortest path algorithm $O(1/\epsilon^2)$ times.
In the next subsection, we show how to obtain a point-to-face shortest path algorithm which has the same asymptotic complexity as Clarkson's point-to-point version.

\subsection{The Third Algorithm}
\begin{theorem}
\label{main}
Given a query point $s$, a face $f$, a set of obstacles $S$ and a real positive parameter $\epsilon$, we can find an $\epsilon$-$approximate$ path of
the shortest obstacle-avoiding path from $s$ to $f$ in time
$O(n^2\lambda(n)\log(n/\epsilon)/{\epsilon}^4+n^2\log (n\rho) \log(n\log\rho))$, where $\rho$ is the ratio of the length of the longest obstacle edge to the
Euclidean distance between $s$ and $t$, and $\lambda(n)$ is a very slowly-growing function related to
the inverse of the Ackermann's function.
\end{theorem}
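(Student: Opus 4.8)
The plan is to adapt Clarkson's two-phase point-to-point construction so that the target face $f$ is handled as a single combinatorial object, rather than invoking the point-to-point algorithm $O(1/\epsilon^2)$ times as in the first two algorithms. First I would compute the point $h$ on $f$ closest to $s$ and run Clarkson's algorithm once to obtain $Q_{sh}$; by Lemma~\ref{2approximation} this yields $B=|Q_{sh}|/(2(1+\epsilon))$ and $D=|Q_{sh}|$ as a lower and upper bound on $|P_{st}|$, where $t$ is the (unknown) endpoint of the true shortest path to $f$. As in the first algorithm, $t$ must lie within distance $D$ of $h$, so it suffices to work inside that disk on $f$.

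Next I would build the visibility graph $V$ exactly as Clarkson does — Steiner points on every obstacle edge at the endpoints of the combinatorial characterization $K_\epsilon$, plus extra Steiner points subdividing each edge into pieces of length at most $\frac{\epsilon}{4n}B$, with the cone structure $C$ of size $O(1/\epsilon^2)$ bounding the out-degree of each node — but additionally place a grid of $O(1/\epsilon^2)$ Steiner points on the portion of $f$ within distance $D$ of $h$, at spacing $\epsilon B$ (Lemma~\ref{sample} guarantees one of these, say $t'$, gives a $3\epsilon$-approximation of $P_{st}$). The face-grid points are inserted as ordinary nodes of $V$: each receives its own cone structure for outgoing edges, and each is reached by incoming cone-edges from the Steiner points on obstacle edges. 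Running a single-source shortest-path computation from $s$ in $V$ and taking the minimum over all face-grid nodes then returns a $3\epsilon$-approximate point-to-face path, which we rescale by using $\epsilon/3$ throughout, exactly as in the first algorithm.

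The crux of the argument — and the main obstacle — is showing that adding the $O(1/\epsilon^2)$ face-grid points does not increase the asymptotic running time beyond Clarkson's point-to-point bound $O(n^2\lambda(n)\log(n/\epsilon)/\epsilon^4+n^2\log(n\rho)\log(n\log\rho))$. Clarkson's size bounds are driven by $K_\epsilon$, which has $O(n^2\lambda(n)/\epsilon^2)$ segments and hence that many Steiner points on obstacle edges, plus $O(n^2/\epsilon)$ length-driven Steiner points; the total node count $N$ is already $\Omega(n^2/\epsilon)$, so the extra $O(1/\epsilon^2)$ face nodes are absorbed provided $n^2/\epsilon \ge 1/\epsilon^2$, i.e. for the regime of interest, and in any case contribute only lower-order terms. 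I must check three things: (i) the number of new graph edges incident to face nodes is $O((N + 1/\epsilon^2)\cdot 1/\epsilon^2) = O(N/\epsilon^2)$ via the cone structure, which is dominated by the $\Theta(N/\epsilon^3)$ or larger edge count Clarkson already incurs; (ii) the correctness proof of the $3\epsilon$ bound still goes through, since the approximation guarantee for the sub-path from $s$ to $t'$ is unaffected by the presence of $f$ as an endpoint and Lemma~\ref{sample} supplies the final $\epsilon B$ slack; and (iii) the two-phase bootstrapping still works — the first phase with $\epsilon=1/2$ and the simple cone structure gives a constant-factor lower bound on $|P_{st}|$ (using $h$ and $Q_{sh}$ as above), which is exactly what the second phase needs. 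Once these three points are verified the stated time bound follows, and this is the main contribution claimed in the introduction.
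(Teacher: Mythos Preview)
Your approach differs from the paper's in a key way. You place a uniform $O(1/\epsilon^2)$ grid on $f$, equip each grid point with a full cone structure, and argue that the added nodes and edges are absorbed into Clarkson's bounds. The paper instead avoids the grid by exploiting a geometric fact you did not use: if the optimal endpoint $t$ lies in the interior of $f$, then the last segment $c_1t$ of $P_{st}$ must be orthogonal to $f$ (a short contradiction argument). Hence $t$ is the orthogonal projection onto $f$ of a point $c_1$ lying on some obstacle edge. The paper therefore places Steiner points on $f$ only as projections of the edge-Steiner-points already present in Clarkson's graph, connecting each projected point $v'$ to its preimage $v$ by a \emph{single} edge and applying \emph{no} cone structure on $v'$. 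The face-point count is then at most the existing edge-Steiner-point count (nodes at most double), each new point contributes exactly one edge, and the complexity is trivially preserved. A further $O(n^2)$ ``tangency'' Steiner points are inserted on obstacle edges (with cone structures) so that the relevant neighboring Steiner point is visible from its projection; the paper then shows that the visibility tests between edge points and their projections on $f$ reduce to $n$ planar subproblems solvable in $O(n^2\log n + N)$ total time.

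Your route may be salvageable, but the complexity argument has a gap you have not addressed. The issue is not the $O(1/\epsilon^4)$ outgoing cone-edges from the face-grid nodes (those are indeed dominated), but how those edges are \emph{computed}, and how the face-grid nodes become targets of the edge-Steiner-points' cones. Clarkson's $K_\epsilon$ machinery determines nearest visible neighbors for points that \emph{slide along a one-dimensional obstacle edge}, exploiting the piecewise linearity of the distance function $h^{C_a}(\beta)$ along that edge; this is what yields the $O(n^2\lambda(n)/\epsilon^2)$ segment bound. A two-dimensional patch of isolated grid points on $f$ does not fit that framework, and you have not explained how to hook the grid points into $V$ without, in effect, re-running the nearest-neighbor-in-cone computation once per grid point---i.e., without incurring the $1/\epsilon^2$ blowup that the first two algorithms suffer and that this theorem is meant to avoid. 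The paper's perpendicularity lemma sidesteps the issue entirely, since each face point inherits its single incident edge directly from its preimage.
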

Notice that in Theorem~\ref{main} the point-to-face approximation has the same complexity as Clarkson's
point-to-point approximation~\cite{Clar87}. Indeed, our solution builds upon Clarkson's point-to-point solution.

We will use a different approach to prove Theorem~\ref{main}.
Instead of building a sample grid on $f$ first, we will construct a visibility graph first and then
decide the additional Steiner points needed on $f$ based on the visibility graph constructed.
All additional Steiner points on $f$ can be added directly to the visibility graph.
Hence, we only need to execute Dijkstra's algorithms once for all Steiner points placed on $f$.
We will explain how additional Steiner points and the corresponding visibility graph edges are produced, and give a count on the total number of Steiner points.

\begin{figure}
    \begin{center}
    \leavevmode
    \includegraphics[height=2in]{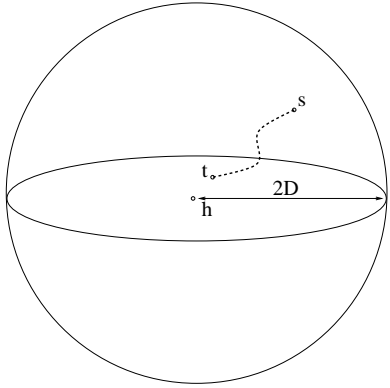}
    \caption{$P_{st}$ cannot travel beyond the sphere  centered at $h$ with radius $2D$.}
    \label{jfig6}
    \end{center}
\end{figure}

Clarkson's algorithm consists of two phases. In the first phase, a coarse approximation of
the shortest path between the source and destination is computed, which gives a better lower bound
for the second phase. We set $h$, which is the closest point on $f$ to $s$ in Euclidean distance,
as the destination point and find a $\frac{1}{2}-approximation$ of the shortest path between $s$
and $h$ using
the first phase of Clarkson's approach. Let $t\in f$ be the end point of the shortest path between
$s$ and $f$. Set the length of the $\frac{1}{2}-approximate$ path as the upper bound $D$ of $P_{st}$
and let $B=D/2(1+1/2)$ as the lower bound, following Lemma.~\ref{2approximation}. Obviously, $P_{st}$ cannot
travel beyond the sphere centered at $h$ with radius $2D$ (see Fig.~\ref{jfig6}). In the second phase,
following Clarkson's approach, we only need to partition obstacle edges within the sphere into
segments of length $\epsilon B/4n$ and then apply the cone structure on each Steiner point created.
With this, we have the following lemma.

\begin{lemma}
\label{lemma-edge}
If $t$ is on an edge $e \in f$ then there exists a Steiner point $t' \in e$ such that
$|Q_{st'}| \le (1+5/4 \epsilon) |P_{st}|$.
\end{lemma}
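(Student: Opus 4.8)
The plan is to reduce the point-to-face case to the point-to-point analysis of Clarkson's second phase, applied with destination $t$ (the true endpoint of the shortest path from $s$ to $f$), but carried out entirely on the Steiner-point structure we have actually built. First I would recall what the setup guarantees: we have a $\frac{1}{2}$-approximation $Q_{sh}$ with $D = |Q_{sh}|$ and $B = D/(2(1+1/2)) = D/3$, and by Lemma~\ref{2approximation} applied to $x = t$ we get $|P_{sh}| \le 2|P_{st}|$, hence $B \le |P_{st}| \le D$. Also, since $|ht| \le |st| \le |P_{st}| \le D$ and every point of $P_{st}$ is within $D$ of $s$ and hence within $2D$ of $h$, the path $P_{st}$ stays inside the sphere of radius $2D$ about $h$; therefore every obstacle edge that $P_{st}$ turns on, as well as the edge $e \ni t$ of $f$, has been partitioned into segments of length $\epsilon B/4n$ in our construction (Fig.~\ref{jfig6}), exactly as in Clarkson's phase two.

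Next I would run Clarkson's standard perturbation argument for the turning points. Let $P_{st}$ make turns at points $p_1,\dots,p_k$ with $k \le n$, each on some obstacle edge now subdivided at scale $\epsilon B/4n$. Snapping each $p_i$ to the nearest Steiner point $p_i'$ on its edge moves it by at most $\epsilon B/8n$, and $t$ itself lies on the subdivided edge $e$, so we snap $t$ to the nearest Steiner point $t' \in e$ with $|tt'| \le \epsilon B/8n$. The resulting polygonal path $s,p_1',\dots,p_k',t'$ is obstacle-avoiding (this is part of Clarkson's visibility-graph correctness — the cone structure and the chosen Steiner granularity ensure the snapped path remains feasible in the visibility graph $V$ extended by the Steiner points on $e$), and by the triangle inequality its length exceeds $|P_{st}|$ by at most $(k+1)\cdot \epsilon B/4n \le (n+1)\epsilon B/4n \le \epsilon B/2 \le \epsilon|P_{st}|/2$. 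Actually, to match the constant $5/4$ in the statement I would be a bit more careful: Clarkson's cone structure contributes its own multiplicative $(1+\epsilon')$ factor when the shortest path in $V$ is extracted (the closest node in a cone, not an arbitrary visible node, is chosen), and combining the additive length increase from snapping with that multiplicative factor, tuned so the two pieces together give $5/4\,\epsilon$, yields $|Q_{st'}| \le (1 + 5/4\,\epsilon)|P_{st}|$, where $Q_{st'}$ is the path reported by Dijkstra on $V$ augmented with $t'$.

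The one genuinely new ingredient — and the step I expect to be the main obstacle — is handling $t'$ as a graph node: we must ensure that adding the Steiner points of $e$ to the visibility graph, with the cone structure applied at each of them and edges back into the existing vertex set of $V$, does not destroy Clarkson's bound on the number of edges per node and does not break the piecewise-linearity / Davenport–Schinzel accounting that bounds $|K_\epsilon|$. Concretely, I would argue that the endpoint $t'$ plays the role of an ordinary Steiner point on an obstacle edge inside the sphere, so its incident-edge count is $O(1/\epsilon^d)$ just like any other node, and that no extra combinatorial characterization is needed beyond $e$'s own subdivision; hence the total number of Steiner points and graph edges remains within Clarkson's bounds. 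With that in place, Dijkstra on $V$ (run once, as promised) returns $Q_{st'}$, and the lemma follows. Finally, as in the first algorithm, to turn a $(1+5/4\,\epsilon)$ bound into a clean $(1+\epsilon)$ bound one simply reruns with $\epsilon$ replaced by $4\epsilon/5$, which does not change the asymptotic complexity — though that last rescaling belongs to the proof of Theorem~\ref{main} rather than to Lemma~\ref{lemma-edge} itself.
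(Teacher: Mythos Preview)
Your proposal reaches the right conclusion, but by a much more laborious route than the paper, and it misidentifies where the difficulty lies. The paper's argument is essentially two lines: pick $t'$ to be the Steiner point on $e$ nearest to $t$, so $|tt'| \le \frac{\epsilon}{8n}|P_{st}|$; then by the triangle inequality $|P_{st'}| \le |P_{st}| + |tt'|$, and since $t'$ is already a Steiner point of Clarkson's second-phase graph, Clarkson's guarantee applies \emph{as a black box}, giving
\[
|Q_{st'}| \le (1+\epsilon)|P_{st'}| \le (1+\epsilon)\Bigl(1 + \tfrac{\epsilon}{8n}\Bigr)|P_{st}| \le \Bigl(1 + \tfrac{5}{4}\epsilon\Bigr)|P_{st}|.
\]
There is no need to snap the intermediate bend points $p_1,\dots,p_k$ yourself, nor to worry about whether the snapped polygonal path is obstacle-avoiding or realized as a walk in $V$: all of that is already packaged inside the single inequality $|Q_{st'}| \le (1+\epsilon)|P_{st'}|$.

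In particular, the step you flag as the ``main obstacle'' --- integrating $t'$ into the visibility graph without disturbing the Davenport--Schinzel accounting --- is a non-issue for this lemma. The edge $e$ of $f$ is an obstacle edge inside the $2D$-sphere, so it was already subdivided at scale $\epsilon B/4n$ and equipped with the cone structure in Clarkson's phase two; $t'$ is simply one of those existing Steiner points, and no augmentation of $V$ is required. Your explicit snapping argument is not wrong in spirit, but the hand-wave ``the snapped path remains feasible in the visibility graph'' is precisely the delicate part of Clarkson's own analysis (cones connect only to the \emph{closest} visible node), and reproducing it here buys nothing over invoking Clarkson's result directly.
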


\begin{proof}
On $e$, choose the Steiner point closest to $t$ as $t'$.
See Fig.~\ref{jfig7} for an illustration.
Since each segment is no longer than
$\frac{\epsilon}{4n}|P_{st}|$, $|tt'|\leq \frac{\epsilon}{8n}|P_{st}|$.
By triangle inequality, we have $|P_{st'}|\leq |P_{st}|+|tt'|$ and
\\
\begin{eqnarray*}
|Q_{st'}|       &\leq& (1+\epsilon)|P_{st'}|~~~~~~(from\ Clarkson's\ Steiner\ point\ placement) \\
                &\leq& (1+\epsilon)(|P_{st}|+|tt'|) \\
                &\leq& (1+\epsilon)|P_{st}|+(1+\epsilon)(\frac{\epsilon}{8n}|P_{st}|) \\
                &\leq& (1+\epsilon+\frac{\epsilon}{8n}+\frac{{\epsilon}^2}{8n})|P_{st}|\\
                &\leq& (1+5/4\epsilon)|P_{st}|~(since \ \epsilon \leq 1 \ and \ n \ge 1)
\end{eqnarray*}
\hfill $\Box$

\begin{figure}
    \begin{center}
    \leavevmode
    \includegraphics[height=2in]{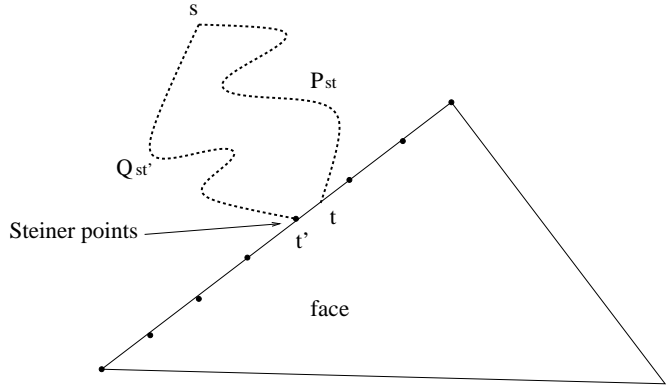}
    \caption{The Steiner point $t'$ gives a good approximation for $t$.}
    \label{jfig7}
    \end{center}
\end{figure}
\hfill $\Box$
\end{proof}

To handle the case when $t$ is in the interior of $f$ we need to add two sets of Steiner points.
We do this as follows.
For the first set, for each obstacle edge $e$, we add point $u\in e$ as a new Steiner point, if $u$ satisfies:\\
1) its projection $u'$ in the plane containing $f$ is in the interior of $f$, and \\
2) $u$ is visible from $u'$, and\\
3) $uu'$ is tangent to some obstacle, which does not contain $e$. \\
We apply the cone structure (following Clarkson's algorithm) on $u$ and add the corresponding edges to the visibility graph. Since there are at most $n^2$ such points and the number of Steiner points in the original visibility graph is $\Omega(n^2)$, the complexity of the visibility graph is unchanged.

To obtain the second set of Steiner points, we proceed as follows.
For each existing Steiner point $v$ on an obstacle edge, let $v'$ be the projection of $v$ in the plane containing $f$. We add $v'$ and edge $v'v$ to the visibility graph if $v'$ is in the interior of $f$ and $v$ is visible from $v'$. Note that we do not apply the cone structure on $v'$. Observe that, we at most double the number of Steiner points
and each additional Steiner point introduces exactly one edge. The complexity of the visibility graph remains the same. We will discuss later how to efficiently determine the visibility between a point and its projection on a plane.

\begin{lemma}
\label{lemma-per}
Let the contact point of $P_{st}$ on the last obstacle edge before reaching $f$ be $c_1$, that is,
$c_1t$ is the last segment of $P_{st}$. If $t$ is an interior point of $f$ then
$c_1t$ must be perpendicular to $f$.
\end{lemma}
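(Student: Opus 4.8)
The plan is to argue by local optimality (geodesic property of shortest paths) together with a variational/exchange argument. Since $t$ lies in the interior of $f$, it is a "free" endpoint: we may slide $t$ slightly inside $f$ in any direction and still have a valid endpoint on the face. Because $P_{st}$ is a shortest obstacle-avoiding path from $s$ to the whole face $f$, the length $|P_{st}|$ must be a local minimum over all such perturbations of the endpoint. The last segment of $P_{st}$ is the straight segment $c_1t$, where $c_1$ is the contact point on the last obstacle edge; perturbing $t$ by an infinitesimal vector $\delta$ tangent to $f$ changes only this last segment (the rest of the path, up to $c_1$, stays fixed and remains feasible), so the first-order change in length is the directional derivative of $|c_1t|$, namely $\langle (t-c_1)/|c_1t|,\ \delta\rangle$.

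The key steps, in order: \emph{(1)} Note $t \in \mathrm{int}(f)$ implies that for every direction $\delta$ in the plane of $f$, both $t+\varepsilon\delta$ and $t-\varepsilon\delta$ are points of $f$ for small $\varepsilon>0$, and the path obtained by replacing the last leg $c_1t$ with $c_1(t\pm\varepsilon\delta)$ is still obstacle-avoiding (for $\varepsilon$ small enough, since $c_1t$ is in the free space and $f$ is a face, the perturbed segment stays arbitrarily close to $c_1t$ and meets $f$ only at its endpoint). \emph{(2)} Optimality of $|P_{st}|$ forces $|c_1(t+\varepsilon\delta)| \ge |c_1t|$ and $|c_1(t-\varepsilon\delta)| \ge |c_1t|$, so the directional derivative of $|c_1t|$ in direction $\delta$ vanishes: $\langle t-c_1,\ \delta\rangle = 0$. \emph{(3)} Since this holds for all $\delta$ in the plane supporting $f$, the vector $t-c_1$ is orthogonal to that plane, i.e. $c_1t \perp f$. \emph{(4)} A minor point to dispatch: $c_1 \ne t$, so that $t-c_1$ is a genuine nonzero vector; this holds because $s \notin f$ (or, if $|P_{st}|=0$ the statement is vacuous), and more carefully because the last obstacle edge is distinct from $f$.

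The main obstacle I expect is the feasibility bookkeeping in step (1): one must be careful that sliding $t$ inside $f$ does not force the perturbed last segment to cross an obstacle or the face $f$ itself. The relevant facts are that $c_1t$ lies in the open free space except at its two endpoints (it touches the obstacle boundary only at $c_1$ on an edge and at $t$ on $f$), and that $f$ is planar, so for sufficiently small $\varepsilon$ the segment $c_1(t\pm\varepsilon\delta)$ is a small perturbation that still touches $f$ only at the moved endpoint and stays clear of all obstacles. A second, subtler point is that the portion of $P_{st}$ from $s$ up to $c_1$ is unaffected and remains a valid path to $c_1$; this is immediate since we are not moving $c_1$. Once feasibility is secured, steps (2)–(4) are just the standard first-order condition for an unconstrained minimum of the smooth function $\delta \mapsto |c_1(t+\delta)|$, and the conclusion $c_1t \perp f$ follows. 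An alternative, equivalent route is to invoke the geodesic/reflection characterization of shortest paths in $\mathcal{R}^3$ stated earlier (paths enter and leave an obstacle edge at equal angles): the endpoint on a face behaves like a "free" reflection point, which forces the incidence angle with the face to be $\pi/2$; I would mention this as the intuition but carry out the variational argument for rigor.
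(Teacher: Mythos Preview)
Your proposal is correct and rests on the same local-optimality idea as the paper: since $t$ is a free endpoint in the interior of $f$, sliding it within $f$ cannot shorten the path, which forces $c_1t\perp f$. The paper's execution differs slightly from your two-sided variational argument: it argues by direct contradiction, letting $c_1'$ be the orthogonal projection of $c_1$ onto the plane of $f$ and sliding $t'$ from $t$ along the segment $tc_1'$, stopping either when $c_1'$ is reached or at the first moment the moving segment $c_1t'$ touches some other obstacle edge; in either case $|c_1t'|<|c_1t|$, contradicting minimality of $P_{st}$. This one-sided slide in the single shortening direction neatly dispatches the feasibility concern you flagged as the main obstacle (there is no need to show that \emph{both} perturbations $t\pm\varepsilon\delta$ yield obstacle-free last segments), whereas your first-order computation makes the orthogonality conclusion more explicit. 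The two arguments are essentially equivalent.
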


\begin{proof}
We make the proof by contradiction. Suppose $t$ is in the interior of $f$ and $c_1t$ is
not perpendicular to $f$. We will construct a new path from $c_1$ to some point $t'\in f$
such that the new path is shorter than $|c_1t|$.

Let $c_1'$ be the projection of $c_1$ on the plane $\pi$ containing $f$. Consider the plane
formed by $c_1$, $c_1'$ and $t$. Let $t'=t$. We shift $t'$ along the segment $tc_1'$ until either $c_1'$ is
reached or the segment $c_1t'$ intersects some edge $e$ at a point $c_2$. We have $|c_1t'|<|c_1t|$, a contradiction.

\hfill $\Box$
\end{proof}
\begin{figure}
    \vspace*{-0.35in}
    \begin{center}
    \leavevmode
    \includegraphics[height=2in]{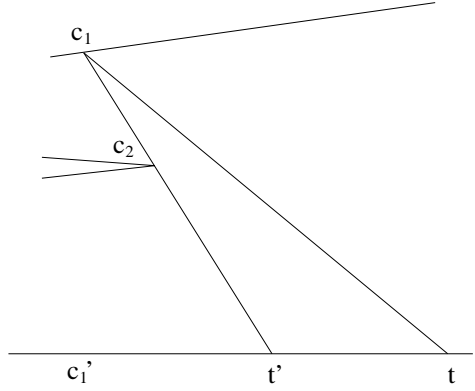}
    \caption{Illustration of (the proof of) Lemma~\ref{lemma-per}.}
    \label{fig2}
    \end{center}
    \vspace*{-0.35in}
\end{figure}

\begin{lemma}
\label{lemma-interior}
If $t$ is an interior point of $f$, in the modified visibility graph there exists at least one Steiner
point $t'$ on $f$ and such that $|Q_{st'}|\le(1+5/4\epsilon)|P_{st}|$.
\end{lemma}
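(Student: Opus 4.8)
The plan is to analyze the last segment $c_1t$ of the shortest path $P_{st}$ using Lemma~\ref{lemma-per}, which tells us that $c_1t$ is perpendicular to $f$ whenever $t$ is in the interior of $f$. Writing $t = c_1'$ for the projection of $c_1$ onto the plane containing $f$, I want to exhibit a Steiner point $t'$ on $f$ that is close enough to $t$ (relative to $|P_{st}|$) so that the approximation bound follows by exactly the same triangle-inequality computation as in Lemma~\ref{lemma-edge}. The key point is that the two sets of Steiner points added on $f$ were designed precisely so that the ``foot'' $c_1' = t$ of a perpendicular last segment is near one of them.

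First I would distinguish cases based on where $c_1$ lies on its obstacle edge $e$. If $c_1$ happens to be one of the existing Steiner points $v$ on $e$, then by construction its projection $v' = t$ is already a Steiner point on $f$ (provided $v$ sees $v'$, which holds here since $c_1t$ is part of the obstacle-avoiding path $P_{st}$), and we may take $t' = t$ with $|Q_{st'}| \le (1+\epsilon)|P_{st}| \le (1+5/4\,\epsilon)|P_{st}|$. Otherwise $c_1$ lies strictly between two consecutive Steiner points on $e$; slide along $e$ from $c_1$ toward the nearer endpoint of its segment. Using the first set of Steiner points — the tangency points $u$ whose perpendicular foot $u'$ is interior to $f$ — together with the perpendicularity of $c_1t$, one argues that within the same segment of $e$ containing $c_1$ there is a Steiner point $v$ whose perpendicular foot $v'$ is interior to $f$ and visible from $v$, so $v'$ is in the second set of added Steiner points. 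Then I take $t' = v'$. Since $|c_1 v| \le \frac{\epsilon}{4n}|P_{st}|$ (segments have length at most $\frac{\epsilon}{4n}|P_{st}|$) and both $c_1t$ and $v v'$ are perpendicular to $\pi$, the feet satisfy $|t\,t'| = |c_1' v'| \le |c_1 v| \le \frac{\epsilon}{4n}|P_{st}|$ (in fact one gets $\le \frac{\epsilon}{8n}$ by choosing the nearer Steiner point). Plugging $|P_{st'}| \le |P_{st}| + |t\,t'|$ and $|Q_{st'}| \le (1+\epsilon)|P_{st'}|$ into the chain of inequalities from Lemma~\ref{lemma-edge} yields $|Q_{st'}| \le (1+5/4\,\epsilon)|P_{st}|$, using $\epsilon \le 1$ and $n \ge 1$.

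The main obstacle I anticipate is the geometric argument that a suitable Steiner point $v$ with interior, visible projection actually exists in the segment of $e$ containing $c_1$ — i.e., that the tangency-based first set of Steiner points really ``brackets'' the region of $e$ whose perpendicular projection lands inside $f$ and is unobstructed. One must rule out the possibility that $c_1$ sits in a tiny sub-interval of its segment where the projection is interior and visible but no nearby Steiner point shares this property; this is exactly what conditions (1)–(3) defining the first set, applied at the boundary of such sub-intervals, are meant to prevent. Making this precise — showing that the endpoints of the maximal sub-interval of $e$ with ``interior visible perpendicular foot'' are captured as Steiner points of the first set, so that $c_1$'s enclosing segment contains such a point — is the crux; the rest is the routine triangle-inequality bookkeeping already carried out in Lemma~\ref{lemma-edge}.
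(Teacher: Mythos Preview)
Your approach is the paper's: invoke Lemma~\ref{lemma-per} for the perpendicularity of $c_1t$, pick a neighbouring Steiner point $d_1$ on the edge through $c_1$ whose projection $t'$ on $f$ is interior and visible from $d_1$ (existence guaranteed by the first added set of Steiner points, exactly the ``crux'' you isolate), and take that $t'$ as the witness. The one point needing care is your final step $|Q_{st'}|\le(1+\epsilon)|P_{st'}|$: Clarkson's $(1+\epsilon)$-guarantee does not apply directly at $t'$, because the cone structure is \emph{not} applied at the second-set points $v'$ --- the only graph edge incident to $t'$ is $t'd_1$. The paper sidesteps this by bounding the graph path explicitly through $d_1$: it writes $|Q_{st'}|=|Q_{sd_1}|+|d_1t'|$, uses $|d_1t'|\le|c_1t|+|c_1d_1|$ (both segments are perpendicular to $\pi$), and combines with $|Q_{sd_1}|\le|P_{sd_1}|+\epsilon|P_{st}|$ to reach $(1+5/4\,\epsilon)|P_{st}|$. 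Your inequality can in fact be recovered --- since $t'$ is interior to $f$ and $d_1$ is the unique obstacle point vertically above it that sees it, $P_{st'}$ must itself pass through $d_1$, giving $|P_{st'}|=|P_{sd_1}|+|d_1t'|$ and hence $|Q_{st'}|\le(1+\epsilon)|P_{st'}|$ --- but as written you are invoking Clarkson's bound at a node where it has not been established.
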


\begin{proof}



By $Lemma \ \ref{lemma-per}$, the last segment $c_1t$ of $P_{st}$ is perpendicular to $f$, since $t$ is an interior
point of $f$. Assume $c_1$ is not a Steiner point, since otherwise we are done.
On the edge containing $c_1$, let $d_1$ be one of the two Steiner points neighboring $c_1$,
specifically, one visible from its projection on $f$. Such a Steiner point exists from our placement of Steiner points (set one). We have $|c_1d_1| \leq \frac{\epsilon}{4n}|P_{st}|$.
Let the projection of $d_1$ on $f$ be $t'$. Use an alternative path from $s$ to $t'$ to approximate $P_{st}$ (See Fig.~\ref{jfig8} for an illustration). We have


\begin{eqnarray*}
|Q_{st'}| &=& |Q_{sd_1}| + |d_1t'| \\
            &\leq& |P_{sd_1}| + \epsilon |P_{st}| + |d_1t'|\\
            &\leq& |P_{sd_1}| + \epsilon |P_{st}| + |c_1t| + |c_1d_1|\\
            &\leq& |P_{st}|+\epsilon |P_{st}|+\frac{\epsilon}{4n}|P_{st}|\\
            &\leq& (1+5/4\epsilon) |P_{st}| \\
\end{eqnarray*}

\hfill $\Box$
\end{proof}

\begin{figure}
    \begin{center}
    \leavevmode
    \includegraphics[height=2in]{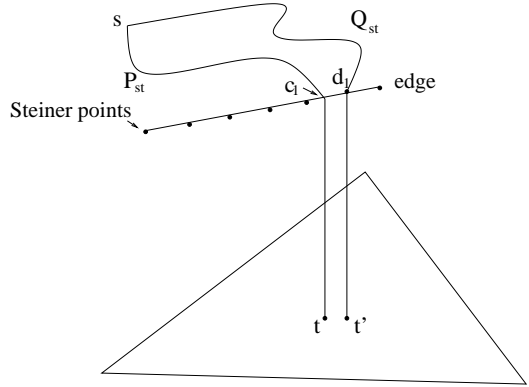}
    \caption{Approximation of $P_{st}$ by an alternative path.}
    \label{jfig8}
    \end{center}
\end{figure}


Following $Lemma~\ref{lemma-edge}$ and $Lemma~\ref{lemma-interior}$, the new visibility graph gives us a
$5/4\epsilon$-$approximate$ path from $s$ to $f$.
To conclude the proof of Theorem~\ref{main} we need to show how to find the additional Steiner points
and visibility edges within the given time bound.

\subsubsection{Visibility Computation}

We now show how to compute the additional Steiner points on obstacle edges as well as on the target obstacle face $f$. Since addition of Steiner points does not change the complexity of the visibility graph, the cost of the shortest path computation is $O(n^2\lambda(n)\log(n/\epsilon)/{\epsilon}^4+n^2\log(n\rho)\log(n\log\rho))$.
We would like to keep the computation of additional Steiner points and edges that capture
the visibility between Steiner points on obstacle edges and their projections on $f$ within this time bound.

A naive approach to compute the visibility between two points $u$ and $u'$
would require $O(n)$ time by checking whether the line segment $uu'$ intersects any obstacle.
Over all points, this will exceed the time bound above.

We can try to reduce the time to compute the visible pairs by using a ray shooting data structure.
As mentioned in~\cite{Shar03}, the general ray shooting problem in three dimensions is still far
from being fully solved.
However, our problem is a special case of the ray shooting problem. We need to determine the visibility
between a point and its orthogonal projection on the target plane $\pi$ supporting $f$, instead of the
visibility between any two arbitrary points, i.e. the direction for ray shooting is always perpendicular
to $\pi$. On each obstacle edge $e$, all Steiner points and their projections on $\pi$ are coplanar.
Let $\pi'$ be the plane passing through $e$ and perpendicular to $\pi$. By computing
$G=\pi' \bigcap S$ for each edge of $S$ (see Fig.~\ref{Fig-2d}), we can transform the three dimensional
visibility problem
into $O(n)$ subproblems in dimension two, where $n$ is the number of obstacle edges.

\begin{figure}
    \begin{center}
    \leavevmode
    \includegraphics[height=2in]{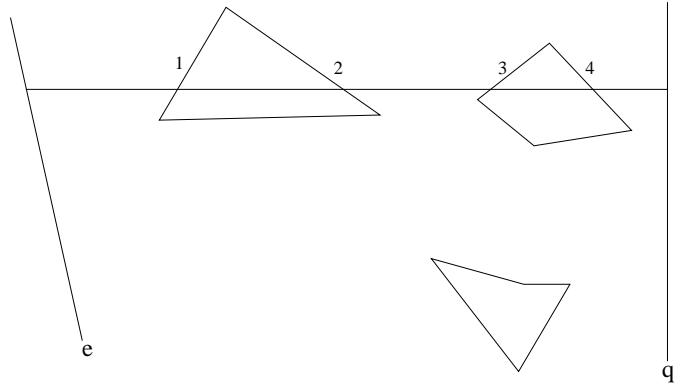}
    \caption{The sweep line intercepts edges $e,1,2,3,4,$ and $q$}
    \label{Fig-2d}
    \end{center}
\end{figure}

Let edge $e'$ be the projection of an edge $e$ onto $\pi$.
Notice that all line segments between Steiner points on $e$
and their projections on $\pi$ (that is, on $e'$) are parallel. We can perform a plane sweeping on $G$
with a horizontal line and maintain a data structure which stores edges intersecting with the sweep line
at a certain time instance. The data structure is updated when
the sweep line passes a vertex and
each insertion and deletion precess requires $O(\log n)$ time~\cite{shamos}. During this process,
we could partition
$e$ into $O(n)$ open segments, such that all points on each segment have the same visibility answer
with respect to their corresponding
projections on $e'$. Furthermore, the end points of segments that are visible from their projection on $f$ are the new Steiner points to be added on obstacle edges. The total time of adding Steiner points is $O(n^2\log n+N)$, where $N$ is the number of Steiner points in the original visibility graph.

We can simplify this process as follows.
Since the order of the edges intersecting with the sweep line is not important, we
can avoid the plane sweep procedure altogether while still partitioning each edge into $O(n)$ segments
in $O(n\log n)$ time. Recall that we assume the obstacles are triangulated. For each triangle $\Delta$, find the
intersection segment $r=\pi' \bigcap \Delta$. If at least one vertex of $r$ is inside of the quadrilateral
formed by the endpoints of $e$ and $e'$, we project $r$ on $e$. The projection of $r$ on $e$ is the
region in which the visibility of Steiner points to their counterparts are blocked by $r$. Hence we can
divide $e$ into $O(n)$ segments in $O(n\log n)$ time. Again, the total time is $O(n^2\log n+N)$.

Note that the cost of visibility computation is dominated by the cost of the shortest path computation, i.e. the overall time to approximate the shortest path from $s$ to $f$ is $O(n^2\lambda(n)\log(n/\epsilon)/{\epsilon}^4+n^2\log(n\rho)\log(n\log\rho))$.

This concludes the proof of Theorem~\ref{main}.

\section{Conclusions}
In this paper we discussed three point-to-face approximate shortest path algorithms
in $\mathcal{R}^3$.
It is interesting to notice that the point-to-face shortest
path must end on $f$ within a certain range from $h$, where $h$ is the point on $f$ closest to $s$ in
Euclidean distance. That range can be obtained by applying known point-to-point shortest path algorithms
between $s$ and $h$. After placing a sample grid near $h$, we proved we
can find an $\epsilon$-$approximate$ path
between $s$ and $f$ in $O(n^4(L+\log(n/\epsilon))^2/\epsilon^2+n^2(L+\log(n/\epsilon))/\epsilon^3)$ time
by extending Papadimitriou's algorithm or in
$O((n^2\lambda(n)\log(n/\epsilon)/{\epsilon}^4+n^2\log (n\rho) \log(n\log\rho))*(1/\epsilon^2))$ time by
extending Har-Peled's algorithm. However, this approach still requires to execute a point-to-point
shortest path algorithm $O(1/\epsilon^2)$ times.
Finally we showed that Clarkson's point-to-point shortest path approach can be extended to solve the
problem by adding additional Steiner points directly to the visibility graph, without changing the
asymptotic complexity of the algorithm, resulting in an
$O(n^2\lambda(n)\log(n/\epsilon)/{\epsilon}^4+n^2\log(n\rho)\log(n\log\rho))$ time algorithm for finding
an $\epsilon$-$approximate$ path between a source point $s$ and an obstacle face $f$.

\baselineskip=12pt

\end{document}